\newcommand{\trees}{Tr}
\newcommand{\dom}{dom}
\newcommand{\lang}{{\mathcal L}}
\tikzstyle{blob} = [
\newcommand{\dashedLine}[4]{
\begin{scope}
\clip (0,0) -- (10,0) -- (10,10) -- (0, 10) -- cycle;
\begin{scope}[#1]
\draw[thick] (-20,0) -- (20,0);
%\path[pattern=#3, pattern color=#4] (-20,0) -- (-20,-#2) -- (20,-#2) -- (20,0);
%\fill (-1,-1) -- (11,-1) -- (11,11) -- (-1, 11) -- cycle;
\end{scope}
\end{scope}
}
\tikzstyle{lang} = [scale=1.2]
\tikzstyle{ubrace} = [draw, thick, decoration={brace, mirror, raise=0.1cm}, decorate,
\tikzstyle{rbrace} = [draw, thick, decoration={brace, mirror, raise=0.1cm}, decorate,
\tikzstyle{obrace} = [draw, thick, decoration={brace, raise=0.1cm}, decorate,
\tikzstyle{lbrace} = [draw, thick, decoration={brace, raise=0.1cm}, decorate,
\newcommand{\ignore}[1]{}
\numberwithin{equation}{section}
\newcommand{\ident}[2]{\newcommand{#1}{\ensuremath{\texorpdfstring{\mathrm{{#2}}}{#2}}\xspace}}
\newcommand{\domain}[2]{\newcommand{#1}{\ensuremath{\mathbb {#2}}}\xspace}
\newcommand{\ot}{\leftarrow}
\newcommand{\dL}{\mathtt{{\scriptstyle L}}}
\newcommand{\dR}{\mathtt{{\scriptstyle R}}}
\newcommand{\finA}{\mathnm{u}}
\newcommand{\finB}{\mathnm{v}}
\newcommand{\finC}{\mathnm{w}}
\newcommand{\infA}{\mathnm{\alpha}}
\newcommand{\infB}{\mathnm{\beta}}
\newcommand{\plyA}{\mathnm{\pi}}
\newcommand{\mysc}[1]{{\sc {\texorpdfstring{#1}{#1}}}\xspace}
\newcommand{\mso}{\mysc{mso}}
\domain{\N}{N}
\domain{\Z}{Z}
\domain{\Q}{Q}
\domain{\R}{R}
\domain{\Pri}{P}
\domain{\D}{D}
\newcommand{\aut}[1]{\ensuremath{\mathcal {#1}}\xspace}
\ident{\RM}{\mathbf{RM}}
\ident{\RMnd}{\mathbf{RM}^{\mathrm{nd}}}
\newcommand{\comp}{\mathrm{Comp}}
\ident{\cmp}{Comp}
\ident{\graph}{Graph}
\ident{\init}{I}
\newcommand{\eqdef}{\stackrel{\text{def}}=}
\newcommand{\mathcalsym}[1]{\ensuremath{\mathcal{#1}}\xspace}
\newcommand{\Aa}{\mathcalsym{A}}
\newcommand{\Cc}{\mathcalsym{C}}
\newcommand{\Dd}{\mathcalsym{D}}
\newcommand{\Gg}{\mathcalsym{G}}
\newcommand{\Ii}{\mathcalsym{I}}
\newcommand{\Rr}{\mathcalsym{R}}
\tikzstyle{ubrace} = [draw, thick, decoration={brace, mirror, raise=0.0cm}, decorate,
\tikzstyle{snakearr} = [decorate, decoration={snake}, draw=blue, ->]
\tikzstyle{coord} = [draw,->,very thick]
\tikzstyle{edge}=[draw, thick,->]
\newcommand{\comment}[1]{}
\newcommand{\fun}[3]{\ensuremath{#1\colon #2 \to #3}}
\newcommand{\parfun}[3]{\ensuremath{#1\colon #2 \rightharpoonup #3}}
\ident{\rg}{rg}
\ident{\id}{id}
\ident{\sgn}{sgn}
\newcommand{\mathnm}[1]{{#1}}
\newcommand{\rmin}{\mathnm{i}}
\newcommand{\rmax}{\mathnm{j}}
\newcommand{\game}{\aut{G}}
\newcommand{\eve}{\ensuremath{\mathnm{\exists}}\xspace}
\newcommand{\adam}{\ensuremath{\mathnm{\forall}}\xspace}
\ident{\close}{cl}
\ident{\inter}{int}
\newcommand{\restr}{{\upharpoonright}}
\title{Unambiguous B\"uchi is weak}
\titlerunning{Unambiguous B\"uchi is weak}
\author{Henryk Michalewski%\thanks{Thanks to the sponsor.}%
\and Micha{\l} Skrzypczak}%\samethanks[1]}
\authorrunning{H. Michalewski and M. Skrzypczak}
\institute{University of Warsaw\\
Banacha 2 Warsaw, Poland\\
\mailsa}
\begin{document}

\maketitle

\begin{abstract}
A non-deterministic automaton on infinite trees is \emph{unambiguous} if it has at most one accepting run on every tree. %In algorithm \textsc{Transformation} \ref{alg:parition} we construct a transformation 
For a  given unambiguous parity automaton $\Aa$ of index $(i,2j)$ we construct an alternating automaton $\textsc{Transformation}(\Aa)$  which accepts the same language, but is simpler in terms of alternating hierarchy of automata. If  $\Aa$ is a  B\"uchi automaton ($i\!=\!0,j\!=\!1$), then $\textsc{Transformation}(\Aa)$ is a weak alternating automaton. In general, $\textsc{Transformation}(\Aa)$  belongs to the class $\comp(\rmin+1,2\rmax)$, in particular it is simultaneously of alternating index $(i,2j)$ and of the dual index $(i+1,2j+1)$. The main theorem of this paper is a correctness proof of the algorithm $\textsc{Transformation}$. 
The transformation algorithm is based on a separation algorithm of Arnold and Santocanale~\cite{arnold_separation} and extends results of Finkel and Simonnet~\cite{finkel_borel}. 
%This result extends the work in particular the language accepted by $\textsc{Transformation}(\Aa)$ belongs to a finite level of the Borel hierarchy. 
%reaches beyond automata theory and appe
%\todo[inline]{Probably better to say something about ``importance'' of this notion in the context of Turing machines and relative simplicity in the context of infinite words. Only then emphasize that here it is more difficult. Also the typical spin that it is a generalization of the notion of deterministic automaton, which is decidable (citation) would be helpful. Basically, it would be nice not to have feeling that this is a self-referential problem. Also, what about finite trees? Does it have any algebraic sense? Yes, indeed. } 
% The class of languages recognisable by unambiguous tree automata is still not well-understood. In particular, i
\end{abstract}

%It is decidable if a given automaton is unambigous, but decidability of the problem whether a given language is recognisable by some unambiguous automaton is open. Moreover, there are no known upper bounds on the descriptive complexity of unambiguous languages among all regular tree languages.
%In this paper we show that unambiguity of an automaton can be traded for the number of parities used. To measure the decrease we use the classes $\comp(i,j)$ of languages recognised by automata where each strongly-connected component uses at most $j{-}i{+}1$ priorities. More specifically, we prove that if an automaton $\Aa$ is unambiguous and its priorities are between $i$ and $2n$ then the language recognised by $\Aa$ can be recognised by an automaton from the class $\comp(i+1,2n)$. A particular case of this theorem is for $i=n=1$: if $\aut{A}$ is an unambiguous B\"uchi tree automaton then the language of $\Aa$ is recognisable by a weak alternating automaton, where each strongly connected component has a fixed priority. This implies that the language can be defined in the weak variant of MSO logic.
%To the authors' best knowledge this is the first theorem showing a collapse of the parity index that exploits the fact that a given automaton is unambiguous. 
%\todo[inline]{Rewrite the order: first the case of B\"uchi and then the general case}
%\todo[inline]{List of extra figures: (1) Lusin, (2) Szczepan's example, (3) index hierarchy with Comp, (4) structure of automaton R, (5) partition of Tr into languages $C_\delta$, $B_\delta$}
%\end{abstract}

\section{Introduction}
%The idea of a unique witness of a computation occupies an important place in the theory of computation. 
%
%Unambigous computations generalize deterministic computations, but in general   In the case of automata on finite and infinite words every computation can be simulated by a deterministic  
 
%The by Rabin~\cite{rabin_s2s} for Monadic Second-Order (MSO) logic, sometimes called the \emph{mother of all decidability results}, is a widely used tool in areas of verification and model-checking.  The languages definable in MSO logic on the full binary tree are called \emph{regular tree languages}. The robustness of this class comes from the equivalence in expressive power between logic and various models of automata (e.g.~non-deterministic, alternating, Rabin, Muller, or parity).

%Although the satisfaction problem is decidable for MSO on infinite trees, many structural properties of regular tree languages are still not well-understood. One of the reasons for that is lack of a canonical representation of a regular tree language. Not only there are no minimal automata for them but also deterministic automata are strictly less expressive than the non-deterministic ones.

Determinising a given computation typically leads to an additional cost. Presence of such cost inspires investigation of intermediate models of computations. Here we focus on {\em unambiguity}, that is the requirement that there are no two distinct accepting computations on the same input. %We call this kind of computations unambiguous. 
%A natural class in-between deterministic and non-deterministic computations are \emph{unambiguous} ones. %This work focus on automata and in this case 
%We say that automaton is unambiguous if it has at most one accepting run on every input.
In the case of finite and infinite words a given automaton can be determinised at an exponential cost, but in the case of infinite trees there are automata which cannot be determinised at all. Moreover, there are automata for which one cannot find an equivalent unambiguous automaton~\cite{niwinski_unambiguous}. Also, there exist unambiguous automata which cannot be simulated by deterministic ones~\cite{hummel_gandalf} (see Figure~\ref{fig:szczepan-example}).% \todo{Dependency between these three classes next to Szczepan's example}

%It seems that an unambiguous automaton represents the structure of the recognised language in a more rigid way than a general non-deterministic automaton (e.g.~the jointly accessible residual languages of the automaton are pairwise disjoint). Also, in some settings~\cite{stearns_unambiguous,bousquet_unambiguous} unambiguous automata admit faster algorithms than general non-deterministic automata. However, as shown in~\cite{niwinski_unambiguous}, there are \emph{ambiguous} regular tree languages that cannot be recognised by unambiguous automata.

Most questions about automata on finite or infinite words are decidable. % and the remaining problem is to optimize the computational cost of algorithms. 
However, in the case of automata on infinite trees many fundamental decidability problems are open, unless we limit attention to deterministic automata. 
%In contrast to general regular tree languages, most of the decision problems are solved in the case of deterministic automata:
Then it is decidable whether a given language is recognisable by a deterministic automaton~\cite{niwinski_deterministic}, the non-deterministic index problem is decidable~\cite{niwinski_relating,niwinski_gap}, as well as it is possible to locate the language in the Wadge hierarchy~\cite{murlak_wadge}. Moving beyond deterministic automata is a topic of an on-going research~\cite{colcombet_weak,murlak_game_auto} and the study of unambiguous automata is a part of this effort. Admittedly, problems for this class seem to be much harder than for deterministic automata, in particular one can decide if a given automaton is unambiguous, but it is an open problem, whether a given regular language is unambiguous. Additionally, there are no upper bounds on the descriptive complexity (e.g.~the parity index) or topological complexity of unambiguous languages among all regular tree languages.% and there are no topological upper bounds limiting set-theoretical complexity of these languages, beyond what is known for general non-deterministic automata.
% (recognisable by unambiguous automata) is still a \emph{terra incognita}. Not only it is unknown how to verify whether a given regular language is unambiguous, but also there are no upper bounds on the descriptive complexity (e.g. parity index) of unambiguous languages among all regular tree languages.

In this work we focus on descriptive complexity and a fortiori also on topological complexity of languages defined by unambigous automata. 
The most canonical measure of descriptive complexity of regular tree languages is the \emph{parity index}. %(shortly index). 
A parity automaton $\Aa$ has index $(\rmin,\rmax)$ if the priorities of the states of the automaton belong to the set $\{\rmin,\rmin+1,\ldots,\rmax\}$. In particular, the B\"uchi acceptance condition corresponds to the index $(1,2)$. It was shown in~\cite{arnold_strict,bradfield_simplifying} that some languages require big indices: for every pair $(\rmin,\rmax)$ there exists a regular language of infinite trees that is of index $(\rmin,\rmax)$ and cannot be recognised by any alternating nor non-deterministic automaton of a lower index. It means that the non-deterministic and alternating index hierarchies are \emph{strict}.

We will show that the fact that a given automaton is unambiguous allows to effectively find another equivalent automaton with a simpler acceptance condition. More precisely, in Section~\ref{sec:main-algorithm} we propose an algorithm $\textsc{Transformation}$ with the following properties: 

\newcommand{\thmmain}{
For an unambiguous B\"uchi automaton $\aut{A}$, $\textsc{Transformation}(\Aa)$ is a weak alternating automaton recognising the same language.  More generally, if $\aut{A}$ is an unambiguous automaton of index $(\rmin,2\rmax)$ then $\textsc{Transformation}(\Aa)$ accepts the same language as $\Aa$ and  belongs to the class $\comp(\rmin+1,2\rmax)$, in particular it is simultaneously of alternating index $(\rmin,2\rmax)$ and of the dual index $(\rmin+1,2\rmax+1)$.  %$ unambiguous automaton of parity index $(\rmin,2n)$ then the language recognised by $\Aa$ can be recognised by an alternating $\comp(\rmin+1,2n)$ automaton.

Additionally, the number of states of $\textsc{Transformation}(\Aa)$ is polynomial in the number of states of $\aut{A}$.
}

\begin{theorem}\label{th:unamb-Buchi-main}
\thmmain
\end{theorem}

This theorem implies in particular that there is no unambiguous B\"uchi automaton which is strictly of index $(1,2)$. Since a language accepted by an unambiguous B\"uchi automaton is also accepted by a weak alternating automaton, topologically such languages must be located at a finite level of the Borel hierarchy. 
One should note that in the above theorem and in the algorithm $\textsc{Transformation}$, the automaton must be simultaneously  unambiguous and of appropriate index. % assumptions are put on one automaton. 
It is still possible for a regular tree language to be both: recognised by some unambiguous automaton and by some other B\"uchi automaton. An example of such a language is the $H$-language proposed in~\cite{hummel_gandalf}: ,,there exists a branch containing only $a$'s and turning infinitely many times right'', see Figure~\ref{fig:szczepan-example}.

%The index of a language turns out to be a good measure of its \emph{complexity}. For instance, in the case of languages of infinite trees definable in $\mu$-calculus, the alternating index corresponds precisely to the alternation of fixpoints used in the definition of a language~\cite{niwinski_mu_calc_index}. 

\subsection{Related work}

%\todo{It will be very helpful if in the paper, the authors could give a proper background of existing literature on unambiguous automata (over infinite words and trees) explaining clearly the known results. This would also make it easier to recognise the contribution of the paper.}

There exist two estimates on descriptive complexity of unambiguous languages. Firstly, a result of Hummel~\cite{hummel_gandalf} shows that unambiguous languages are topologically harder than deterministic ones, see  Figure~\ref{fig:szczepan-example}. Secondly, Finkel and Simonnet~\cite{finkel_borel} proved using the Lusin-Souslin Theorem~\cite[Theorem~15.1]{kechris_descriptive}
from descriptive set theory, that any language recognised by an unambiguous B\"uchi automaton must be Borel.

Our theorem involves not only a set-theoretical argument but also an automata construction encapsulated by the algorithm \textsc{Transformation}. Our result also gives a stronger information about the descriptive complexity, since (1) it is an open problem whether for a given regular Borel language of infinite trees does exist a weak alternating automaton accepting this language, (2) our \textsc{Transformation} algorithm works for arbitrary parities and it is not clear how to generalize the set-theoretical method of Finkel and Simonnet~\cite{finkel_borel} beyond B\"uchi automata.

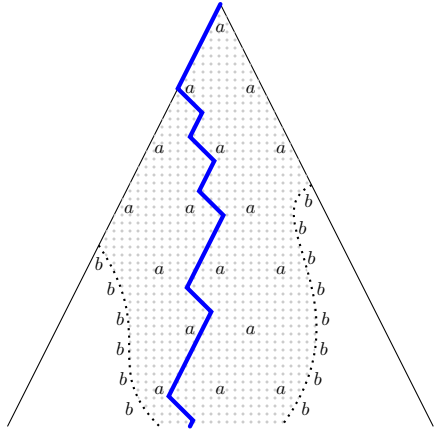
\begin{wrapfigure}[21]{l}{0.47\textwidth}
\centering
\begin{tikzpicture}[scale=0.8]

\draw (-3.5, -7) -- (0,0) -- (3.5,-7);

\path[pattern=dots, pattern color=black!20]
	(0,0) --
	(-2,-4) --
	(-2, -4) .. controls (-1.0,-5.5) and (-2.0,-6) .. (-1,-7) --
	(+1,-7)  .. controls (+2.5,-5) and (0.5,-3.5) ..  (1.5, -3) --
	cycle;

\draw[dotted, thick] (-2, -4) .. controls (-1.0,-5.5) and (-2.0,-6) .. (-1,-7);

\draw[dotted, thick] (1.5, -3) .. controls (0.5,-3.5) and (+2.5,-5) .. (+1,-7);

\foreach \x/\y in {
  0/0,
  -0.5/1,+0.5/1,
  -1.0/2, -0.0/2, +1.0/2,
  -1.5/3, -0.5/3, +0.5/3,
  -1.0/4, -0.0/4, +1.0/4,
  -0.5/5, +0.5/5,
  -1.0/6, -0.0/6, +1.0/6} {
  \node[scale=0.7, yshift=-13] at (\x,-\y) {$a$};
}

\foreach \x/\y in {
  -2/4.0,
  -1.8/4.4,
  -1.65/4.9,
  -1.65/5.4,
  -1.6/5.9,
  -1.5/6.4,
  1.45/2.95,
  1.35/3.4,
  1.5/3.9,
  1.65/4.4,
  1.73/4.9,
  1.7/5.4,
  1.6/5.9,
  1.4/6.4} {
  \node[scale=0.7, yshift=-10] at (\x,-\y) {$b$};
}

\path[draw=blue, ultra thick, cap=round] (0.0,0.0) --
	++(-0.7,-1.4) -- ++(0.4,-0.4) --
	++(-0.2, -0.4) -- ++(0.4,-0.4) --
	++(-0.25,-0.5) -- ++(0.4,-0.4) --
	++(-0.6,-1.2) -- ++(0.4,-0.4) --
	++(-0.7,-1.4) -- ++(0.4,-0.4) --
	++(-0.05,-0.1);

\end{tikzpicture}
\caption{A tree from the language $H$---the tree is labelled by letters $a$ and $b$, the dotted region contains vertices reachable from the root by $a$-vertices. The blue thick branch is a branch consisting of $a$-vertices that turns $\dR$ infinitely many times.}
\label{fig:szczepan-example}
\end{wrapfigure}

The Lusin-Souslin Theorem used in~\cite{finkel_borel}  says that if $\fun{f}{X}{Y}$ is injective and Borel then the image $f[X]$ is Borel in $Y$. The proof of this theorem is based on the Lusin Separation Theorem~\cite[Theorem~14.7]{kechris_descriptive}. These theorems are set-theoretical in nature and the result in this work can be considered as an automata-theoretic counterpart of the Lusin-Suslin theorem. As a sub-procedure in the algorithm \textsc{Transformation} we use an algorithm \textsc{Separation} from~\cite{arnold_separation}, which itself is an automata-theoretic counterpart of   
the Lusin Separation Theorem.

%This theorem extends the mentioned result from~\cite{finkel_borel} in two directions. Firstly, we show that every unambiguous B\"uchi automaton recognises a language that is weak MSO-definable. It is known that every regular tree language definable in weak MSO is Borel but the converse is open. Secondly, the theorem presented here gives a collapse also for higher priorities.
%From the point of view of model checking, the theorem provides a way of simplifying the acceptance condition of an automaton, without essential state explosion.

To the authors' best knowledge this is the first work where it is shown how to use the fact that a given automaton is unambiguous to derive upper bounds on the parity index of the recognised language. Therefore, this work should be treated as a first step towards descriptive complexity bounds for unambiguous languages, and generally better understanding of this class of automata.

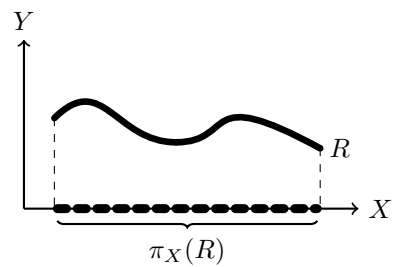
\begin{wrapfigure}{r}{0.45\textwidth}
\centering
\begin{tikzpicture}[scale=0.8]
\begin{scope}
\clip (-0.3, -1.0) -- (6.2,-1.0) -- (6.2, 3.4) -- (-0.3,3.4);
\path[draw,->,thick] (0,0) -- (0,2.8) node[anchor=south] {$Y$};
\path[draw,->,thick] (0,0) -- (5.5,0) node[anchor=west] {$X$};

\path[line width=2.5, draw=black, cap=round] (0.5,1.5) .. controls (1.3,2.3) and (1.5,1.1) .. (2.5,1.1);
\path[line width=2.5, draw=black, cap=round] (2.5,1.1) .. controls (3.5,1.1) and (4.87-2,2.1) .. (4.87,1.0);
\node[anchor=west] at (4.87,1.0) {$R$};

\path[draw=black, dashed] (0.5,1.5) -- (0.5, 0);
\path[draw=black, dashed] (4.87,1.0) -- (4.87, 0);
\path[draw=black, line width=3.4, loosely dotted, cap=round] (0.5+0.05,0) -- (4.87-0.05,0);
\path[ubrace] (0.5+0.05,-0.2) -- node {$\pi_X(R)$} (4.87-0.05,-0.2);
\end{scope}
\end{tikzpicture}
\caption{An illustration of Lusin-Souslin Theorem. A relation $R\subseteq X\times Y$ is Borel and \emph{uniformised}. The theorem implies that $\pi_X(R)\subseteq X$ is Borel as well.}
\label{fig:lusin-souslin}
\end{wrapfigure}
%The construction presented here can be seen as an automata-theoretic adaptation of the proof of Lusin-Souslin Theorem~\cite[Theorem~15.1]{kechris_descriptive} stating that if $\fun{f}{X}{Y}$ is injective and continuous then the image $f(X)$ is Borel in $Y$. The proof of this theorem is based on the Lusin Separation Theorem~\cite[Theorem~14.7]{kechris_descriptive}.

\subsection{Outline of the paper}
%\todo{to be rewritten once other things are done}
%\todo[inline]{Yes, indeed, another figure}
%To prove Theorem~\ref{th:unamb-Buchi-main} 
We first prove Lemma \ref{lem:unamb-buchi-disjoint} which states that if an automaton is unambiguous then the transitions of the automaton correspond to disjoint languages. In the algorithm \textsc{Partition} we use an algorithm of Arnold and Santocanale and show that these disjoint languages can be separated by $\comp(\rmin+1, 2\rmax)$ languages. %This leads to a construction of a unique run $\rho_t$ of a given automaton on a given tree $t$ (Lemma~\ref{lem:unique} in Section~\ref{sec:uBuchi-unique}).

In Section~\ref{sec:uBuchi-construction} we provide a construction of the automaton $\textsc{Transformation}(\Aa)$ and in Section \ref{sec:uBuchi-correctness} we conclude the proof of Theorem~\ref{th:unamb-Buchi-main} by proving correctness of this construction.% an effective construction of a $\comp(\rmin+1,2n)$-automaton recognising the language of $\Aa$. 

\section{Basic notions}

In this section we introduce basic notions used in the rest of the paper. A good survey of the relations between deterministic, unambiguous, and non-deterministic automata is~\cite{colcombet_determinism}. A general background on automata and logic over infinite trees can be found in~\cite{thomas_languages}.

Our models are infinite, labelled, full binary trees. The labels come from a non-empty finite set $A$ called \emph{alphabet}. A tree $t$ is a function $\fun{t}{\{\dL,\dR\}^\ast}{A}$. The set of all such trees is $\trees_A$. Vertices of a tree are denoted $\finA,\finB,\finC\in\{\dL,\dR\}^\ast$. The prefix-order on vertices is ${\preceq}$, the minimal element of this order is the \emph{root} $\epsilon\in\{\dL,\dR\}^\ast$. The label of a tree $t\in\trees_A$ in a vertex $\finA\in\{\dL,\dR\}^\ast$ is $t(\finA)\in A$. $t\restr_\finA$. stands for the subtree of a tree $t$ rooted in a vertex $\finA$. Infinite branches of a tree are denoted as $\infA,\infB\in\{\dL,\dR\}^\omega$. We extend the prefix order to them, thus $\finA\prec \infA$ if $\finA$ is a prefix of $\infA$. For an infinite branch $\infA\in\{\dL,\dR\}^\omega$ and $k\in\omega$ by $\infA\restr_k$ we denote the prefix of $\infA$ of length $k$ (e.g.~$\infA\restr_0=\epsilon$).

A \emph{non-deterministic tree automaton} $\aut{A}$ is a tuple $\left<Q,A,q_0,\Delta,\Omega\right>$ where: $Q$ is a finite set of \emph{states}; $A$ is an alphabet; $q_\init\in Q$ is an \emph{initial state}; $\Delta\subseteq Q\times A\times Q\times Q$ is a \emph{transition relation}; $\fun{\Omega}{Q}{\N}$ is a \emph{priority function}.

If the automaton $\aut{A}$ is not known from the context we explicitly put it in the superscript, i.e.~$Q^\aut{A}$ is the set of states of $\aut{A}$.

A \emph{run} of an automaton $\aut{A}$ on a tree $t$ is a tree $\rho\in\trees_Q$ such that for every vertex $\finA$ we have $\big(\rho(\finA),t(\finA), \rho(\finA\dL),\rho(\finA\dR)\big)\in \Delta.$ A run $\rho$ is \emph{parity-accepting} if on every branch $\infA$ of the tree we have
\begin{equation}
\label{eq:parity-crit}
\limsup_{n\to\infty}\ \Omega\big(\rho(\infA\restr_n)\big) \equiv 0 \mod{2}. \tag{\color{blue}$\bigtriangleup$}
\end{equation}
We say that a run $\rho$ \emph{starts} from the state $\rho(\epsilon)$. A run $\rho$ is \emph{accepting} if it is parity-accepting and starts from $q_\init$. The \emph{language recognised} by $\Aa$ (denoted $\lang(\Aa)$) is the set of all trees $t$ such that there is an accepting run $\rho$ of $\aut{A}$ on $t$.

A non-deterministic automaton $\aut{A}$ is \emph{unambiguous} if for every tree $t$ there is at most one accepting run of $\Aa$ on $t$.

An \emph{alternating tree automaton} $\aut{C}$ is a tuple $\left<Q,A,Q_\eve, Q_\adam,q_0,\Delta,\Omega\right>$ where: $Q$ is a finite set of \emph{states}; $A$ is an alphabet; $Q_\eve\sqcup Q_\adam$ is a partition of $Q$ into sets of positions of the players \eve and \adam; $q_\init\in Q$ is an \emph{initial state}; $\Delta\subseteq Q\times A\times \{\epsilon, \dL,\dR\}\times Q$ is a \emph{transition relation}; $\fun{\Omega}{Q}{\N}$ is a \emph{priority function}. For technical reasons we assume that for every $q\in Q$ and $a\in A$ there is at least one transition $(q,a,d,q')\in \Delta$ for some $q'\in Q$ and $d\in\{\epsilon,\dL,\dR\}$.% We call a transition as above a \emph{$d$-transition}.

An alternating tree automaton $\aut{C}$ induces, for every tree $t\in\trees_A$, a parity game $\game(\aut{C},t)$. The positions of this game are of the form $(\finA,q)\in \{\dL,\dR\}^\ast\times Q$. The initial position is $(\epsilon,q_\init)$. A position $(\finA,q)$ belongs to the player $\eve$ if $q\in Q_\eve$, otherwise $(\finA,q)$ belongs to $\adam$. The priority of a position $(\finA,q)$ is $\Omega(q)$. There is an edge between $(\finA,q)$ and $(\finA d,q')$ whenever $\left(q, t(\finA), d, q'\right)\in\delta$. An infinite play $\plyA$ in $\game(\aut{C},t)$ is winning for $\eve$ if the highest priority occurring infinitely often on $\plyA$ is even, as in condition~\eqref{eq:parity-crit}.

We say that an alternating tree automaton $\aut{C}$ \emph{accepts} a tree $t$ if the player $\eve$ has a winning strategy in $\game(\aut{C},t)$. The language of trees accepted by $\aut{C}$ is denoted by $\lang(\aut{C})$. A non-deterministic or alternating automaton $\aut{A}$ \emph{has index $(\rmin,\rmax)$} if the priorities of $\aut{A}$ are among $\{\rmin,\rmin+1,\ldots,\rmax\}$. An automaton of index $(1,2)$ is called a \emph{B\"uchi automaton}. Every alternating tree automaton can be naturally seen as a graph --- the set of nodes is $Q$ and there is an edge $(q,q')$ if $(q,a,d,q')\in \Delta$ for some $a\in A$ and $d\in\{\epsilon,\dL,\dR\}$. We say that an alternating tree automaton $\aut{D}$ is a \emph{$\comp(\rmin,\rmax)$ automaton} if every strongly-connected component of the graph of $\aut{D}$ is of index $(\rmin,\rmax)$ or $(\rmin+1,\rmax+1)$, see~\cite{arnold_separation}.

Note that an alternating automaton $\aut{C}$ is $\comp(0,0)$ if and only if $\aut{C}$ is a weak alternating automaton in the meaning of~\cite{kupferman_complementation}. The following fact gives a connection between these automata and weak~\mso (the variant of monadic second-order logic where set quantifiers are restricted to finite sets).

\begin{theorem}[Rabin~\cite{rabin_separation}, also Kupferman Vardi~\cite{kupferman_complementation}]
If $\aut{C}$ is an alternating $\comp(0,0)$ automaton then $\lang(\aut{C})$ is definable in weak~\mso. Similarly, if $L\subseteq\trees_A$ is definable in weak~\mso then there exists a $\comp(0,0)$ automaton recognising $L$.
\end{theorem}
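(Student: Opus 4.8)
The plan is to prove the two implications separately. The direction from weak MSO to a $\comp(0,0)$ automaton is a fairly routine induction on formulas; the converse direction is where the real work lies.

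\emph{Weak MSO $\Rightarrow$ $\comp(0,0)$ automaton.} I would argue by structural induction, maintaining that the class of languages recognised by $\comp(0,0)$ automata --- over a product alphabet $A\times\{0,1\}^m$ encoding the free variables --- is closed under the operations used to build weak MSO formulas. Atomic formulas (the order and label predicates, membership) are recognised by simple $\comp(0,0)$ automata. For a conjunction, resp.\ disjunction, one adds a fresh initial state owned by $\adam$, resp.\ $\eve$, with two $\epsilon$-transitions to the initial states of the component automata; the new state is a trivial strongly connected component, so the result stays $\comp(0,0)$. For negation one dualises (swap $Q_\eve$ with $Q_\adam$, replace $\Omega$ by $\Omega+1$) and then renormalises priorities inside every strongly connected component, recolouring a component that became constantly $2$ to $0$; since the priority sequence of every play stabilises inside one component, this leaves the recognised language unchanged, and the automaton is again $\comp(0,0)$. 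The one delicate case is existential quantification over a \emph{finite} set $X$. Starting from a $\comp(0,0)$ automaton $\aut{B}$ over $A\times\{0,1\}$, I would build an automaton over $A$ in which $\eve$ guesses, at every node, the characteristic bit of $X$ and simulates $\aut{B}$; all states of this ``guessing'' phase get priority $1$, so they form components of index $(1,1)$. At any node $\eve$ may instead ``close off'' the subtree, switching to a disjoint copy of $\aut{B}$ that reads the bit $0$ everywhere below and retains the priorities of $\aut{B}$; no transition leads from that copy back to the guessing phase. If $\eve$ never closes off along some branch, the priority limit on that branch is $1$ and the play is lost; hence in every winning strategy each branch is closed off after finitely many steps, which --- the tree being finitely branching --- forces the guessed set to be finite. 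Conversely any finite witness $X$ is realised by closing off each branch just below its last element of $X$. One checks that the whole construction is again $\comp(0,0)$.

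\emph{$\comp(0,0)$ automaton $\Rightarrow$ weak MSO.} Here I would exploit the structure forced by the $\comp(0,0)$ condition: every strongly connected component of the graph of $\aut{C}$ is priority-homogeneous, of priority $0$ (call it \emph{accepting}) or $1$ (\emph{rejecting}). Hence along any play in $\game(\aut{C},t)$ the component changes only finitely often, the play settles in one component, and $\eve$ wins the play iff that component is accepting. I would then build, for each state $q$, a weak MSO formula $\mathrm{Win}_q(v)$ expressing ``$\eve$ wins $\game(\aut{C},t)$ from the position $(v,q)$'', processing the components so that all components properly reachable from a component are treated before it. The absorbing components form the base case, where $\mathrm{Win}_q$ is identically true if $q$ lies in an accepting component and identically false if $q$ lies in a rejecting one. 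For a general component $C$, a play starting in a $C$-position either stays among $C$-positions forever or leaves $C$ into a position of an already treated, strictly lower component. If $C$ is accepting, $\eve$ wins from $(v,q)$ iff $\adam$ \emph{cannot} force the play --- while it stays among $C$-positions --- to leave $C$ at a position $(v',q')$ with $\lnot\mathrm{Win}_{q'}(v')$; if $C$ is rejecting, $\eve$ wins from $(v,q)$ iff \emph{she} can force the play --- while it stays among $C$-positions --- to leave $C$ at a position $(v',q')$ with $\mathrm{Win}_{q'}(v')$. In both cases this is a reachability game played on a fixed finite device over the tree. The place where weakness is indispensable is that such a reachability game admits a \emph{finite} winning certificate: every game position has only finitely many successors, so a winning strategy for the reaching player unfolds into a finitely branching tree all of whose branches are finite, hence, by K\"onig's lemma, into a finite strategy tree whose leaves lie in the target. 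A finite strategy tree is a finite set of game positions, i.e.\ a finite tuple of finite sets of tree vertices, so its existence --- together with the inductively available formulas for $\mathrm{Win}$ at the leaves --- is expressible in weak MSO. Finally $\lang(\aut{C})$ is defined by the sentence $\mathrm{Win}_{q_0}(\epsilon)$, the root being first-order definable.

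\emph{Main obstacle.} In the first direction the care lies entirely in the finite-set quantifier: forcing the guessed set to be finite (through the priority-$1$ guessing region) without losing any finite witness, while keeping every component of index $(0,0)$ or $(1,1)$. In the second direction the crux is the reduction of acceptance to nested reachability games each of which carries a \emph{finite} winning certificate; this finiteness is precisely what the $\comp(0,0)$ hypothesis buys, and it is what makes $\mathrm{Win}$ weak MSO-definable rather than merely MSO-definable.
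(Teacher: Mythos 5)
The paper does not actually prove this statement --- it is quoted as a classical theorem of Rabin --- so your sketch can only be measured against the known proofs. Measured that way, your second direction ($\comp(0,0)$ automaton $\Rightarrow$ weak MSO) follows the standard route (priority-homogeneous strongly connected components, every play settles in one of them, nested reachability games with finite certificates) and is essentially sound, modulo routine care with $\epsilon$-loops inside a component when encoding the certificates. The first direction, however, has a genuine gap exactly at the finite-set existential quantifier. Letting $\eve$ guess the characteristic bit of $X$ node by node inside an \emph{alternating} automaton is unsound: once $\adam$ has branched by an $\epsilon$-move (a conjunction), two distinct plays can later pass through the same tree vertex, and a strategy of $\eve$ may guess the bit of that vertex differently in the two plays, so her winning the game does not produce any single set $X$. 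Concretely, take $\varphi(X)\equiv(u\in X)\wedge(u\notin X)$ for $u$ the left child of the root: the natural alternating automaton lets $\adam$ pick a conjunct at the root by an $\epsilon$-transition and then checks the bit at $u$; in your projected automaton $\eve$ wins by guessing $1$ in one play and $0$ in the other, although $\exists X\,\varphi(X)$ is unsatisfiable. On-the-fly guessing of labels is sound only for non-deterministic automata, where two plays reaching the same vertex share their entire history; your ``close off a finite prefix'' device addresses finiteness of $X$, which was never the problematic point, but not this consistency problem.

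The repair is precisely the hard core of the theorem, which your sketch bypasses. One must leave the alternating model before projecting: either maintain as inductive invariant that both $L_\varphi$ and its complement are recognisable by non-deterministic B\"uchi automata (Rabin's original argument, where the difficult step is the universal finite quantification), or de-alternate the weak automaton into a non-deterministic B\"uchi automaton, project there (your cut-plus-K\"onig argument is fine in that setting), and then come back from ``B\"uchi and co-B\"uchi'' to a $\comp(0,0)$ automaton. That last step is Rabin's separation/characterisation theorem --- the $i=n=1$ case of the Arnold--Santocanale theorem quoted in this very paper --- and it is not obtainable by simply assigning priority $1$ to a guessing region while keeping components of index $(0,0)$ or $(1,1)$.
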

The crucial technical tool in our proof is the \textsc{Separation} algorithm by Arnold and Santocanale~\cite{arnold_separation}. A particular case of this algorithm for $\rmin=\rmax=1$ is the classical Rabin separation construction (see~\cite{rabin_separation}): if $L_1$, $L_2$ are two disjoint languages recognisable by B\"uchi alternating tree automata then one can effectively construct a weak~\mso-definable language $L_S$ that separates them.

\RestyleAlgo{ruled}
\begin{algorithm}
\label{alg:nondet-sep}
\caption{\textsc{Separation}}

\SetAlgoVlined
\LinesNumbered

\KwIn{Two non-deterministic automata $\aut{A}_1$, $\aut{A}_2$ of index $(i,2\rmax)$ such that $\lang(\aut{A}_1)\cap \lang(\aut{A}_2)=\emptyset$.} 
\KwOut{An alternating $\comp(\rmin{+}1,2\rmax)$ automaton $\aut{S}$ such that
\[\lang(\aut{A}_1)\subseteq \lang(\aut{S})\quad\text{and}\quad\lang(\aut{A}_2)\cap\lang(\aut{S})=\emptyset.\]}
\end{algorithm}

%\begin{comment}
%The correctness proof of Algorithm~\ref{alg:nondet-sep} is contained in the following 
%
%\begin{theorem}[Arnold Santocanale~\cite{arnold_separation}]
%\label{thm:nondet-sep}
%If $\aut{A}_1$, $\aut{A}_2$ are non-deterministic parity tree automata of index $(\rmin,2\rmax)$ such that $\lang(\aut{A}_1)$, $\lang(\aut{A}_2)$ are disjoint then there exists an alternating $\comp(\rmin{+}1,2\rmax)$ automaton $\aut{S}$ such that
%\[\lang(\aut{A}_1)\subseteq \lang(\aut{S})\ \text{and}\ \lang(\aut{A}_2)\cap\lang(\aut{S})=\emptyset \quad\text{(i.e.~$\lang(\aut{S})$ \emph{separates} $\lang(\aut{A}_1)$ and $\lang(\aut{A}_2)$)}.\]
%Additionally, the automaton $\aut{S}$ can be effectively computed and it has polynomially many states in the number of states of $\Aa_1$ and $\Aa_2$.
%\end{theorem}
%The above theorem can be considered as a corectness proof of .
%\end{comment}

\section{Partition property}
\label{sec:partition}

In this section we will prove Lemma~\ref{lem:unamb-buchi-disjoint} stating that if an automaton $\Aa$ is unambiguous then the transitions of $\Aa$ need to induce disjoint languages. This will be important in the algorithm \textsc{Partition} which for a given unambiguous automaton of index $(\rmin,2\rmax)$, constructs a family of $\comp(\rmin+1,2\rmax)$ automata that split the set of all trees into disjoint sets corresponding to the respective transitions of $\Aa$. \textsc{Partition} will be used in \textsc{Transformation}.

Let us fix an unambiguous automaton $\aut{A}$ of index $(\rmin,2\rmax)$. Let $Q$ be the set of states of $\aut{A}$ and $A$ be its working alphabet. We say that a transition $\delta=(q,a, q_\dL,q_\dR)$ of $\aut{A}$ \emph{starts} from $(q,a)$; let $\Delta_{q,a}$ be the set of such transitions.

A pair $(q,a)\in Q\times A$ is \emph{productive} if it appears in some accepting run: there exists a tree $t\in \trees_A$ and an accepting run $\rho$ of $\aut{A}$ on $t$ such that for some vertex $\finA$ we have $\rho(\finA)=q$ and $t(\finA)=a$. This definition combines two requirements: that there exists an accepting run that leads to the pair $(q,a)$ and that some tree can be parity-accepted starting from $(q,a)$. Note that if $(q,a)$ is productive then there exists at least one transition starting from $(q,a)$. Without changing the language $\lang(\Aa)$ we can assume that if a pair is not productive then there is no transition starting from this pair.

For every transition $\delta=(q,a,q_\dL,q_\dR)$ of $\aut{A}$ we define $L_{\delta}$ as the language of trees such that there exists a run $\rho$ of $\aut{A}$ on $t$ that is parity-accepting and \emph{uses $\delta$ in the root of $t$} $\rho(\epsilon)=q$, $t(\epsilon)=a$, $\rho(\dL)=q_\dL$, and $\rho(\dR)=q_\dR$. Clearly the language $L_\delta$ can be recognised by an unambiguous automaton of index $(\rmin,2\rmax)$. If $(q,a)$ is not productive then $L_{(q,a,q_\dL,q_\dR)}=\emptyset$. The following lemma is a simple consequence of unambiguity of the given automaton $\aut{A}$.

\begin{lemma}
\label{lem:unamb-buchi-disjoint}
If $\delta_1\neq \delta_2$ are two transitions starting from the same pair $(q,a)$ then the languages $L_{\delta_1}$, $L_{\delta_2}$ are disjoint.
\end{lemma}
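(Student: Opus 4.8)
\emph{Proof sketch.} The plan is to derive a contradiction from \emph{unambiguity}. The obstacle is that unambiguity only forbids two distinct accepting runs that both start in $q_0$, whereas the natural source of ambiguity here --- a tree $t\in L_{\delta_1}\cap L_{\delta_2}$ --- produces two distinct accepting runs starting in $q$. The bridge is \emph{productivity}: it supplies a ``context'' carrying a run from $q_0$ down to an occurrence of $(q,a)$, so that any ambiguity at $(q,a)$ can be transported to an ambiguity at $q_0$.

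Concretely, suppose for contradiction that $t\in L_{\delta_1}\cap L_{\delta_2}$. By definition there are accepting runs $\rho_1,\rho_2$ of $\aut{A}$ on $t$ with $\rho_j$ using $\delta_j$ at the root; so $\rho_1(\epsilon)=\rho_2(\epsilon)=q$, $t(\epsilon)=a$, and since $\delta_1\neq\delta_2$ already $(\rho_1(L),\rho_1(R))\neq(\rho_2(L),\rho_2(R))$, hence $\rho_1\neq\rho_2$. Using that $(q,a)$ is productive, fix a tree $s$, an accepting run $\sigma$ of $\aut{A}$ on $s$ with $\sigma(\epsilon)=q_0$, and a vertex $v$ with $\sigma(v)=q$ and $s(v)=a$. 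Form the tree $s'$ by grafting $t$ in place of the subtree at $v$ (so $s'\restr_v=t$ and $s'$ agrees with $s$ off the subtree at $v$; note $s'(v)=t(\epsilon)=a$), and for $j\in\{1,2\}$ form $\sigma_j$ by grafting $\rho_j$ at $v$ onto $\sigma$.

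I would then check the two routine facts. First, each $\sigma_j$ is an accepting run of $\aut{A}$ on $s'$ from $q_0$: the transition condition holds off the subtree at $v$ (there $\sigma_j,s'$ copy $\sigma,s$), strictly below $v$ (there $\sigma_j$ is the run $\rho_j$ on $s'\restr_v=t$), and at $v$ itself it is exactly $\delta_j\in\Delta$ since $\sigma_j(v)=q$, $\sigma_j(vd)=\rho_j(d)$, $s'(v)=a$; acceptance follows because the $\limsup$-parity condition on a branch depends only on its tail, and every branch of $s'$ has a tail that is a branch of either $\sigma$ (if it avoids $v$) or $\rho_j$ (if it passes through $v$). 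Second, $\sigma_1\neq\sigma_2$, since $\sigma_1\restr_v=\rho_1\neq\rho_2=\sigma_2\restr_v$ (indeed they already differ at a child of $v$). This exhibits two distinct accepting runs of $\aut{A}$ on $s'$ starting from $q_0$, contradicting unambiguity; therefore $L_{\delta_1}\cap L_{\delta_2}=\emptyset$.

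The only genuinely delicate point is the gap between ``ambiguous at $q$'' and ``ambiguous at $q_0$'', which is exactly what productivity is designed to close; once the context $s$ is in hand, the grafting and the tail-invariance of the acceptance condition make the rest bookkeeping.
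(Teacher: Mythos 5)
Your argument is correct and is essentially identical to the paper's proof: both use productivity to obtain a context run from $q_0$ reaching $(q,a)$, graft the hypothetical tree from $L_{\delta_1}\cap L_{\delta_2}$ with its two runs at that vertex, and derive a contradiction to unambiguity from the resulting two distinct accepting runs from $q_0$. You simply spell out the verification of the transition and acceptance conditions that the paper leaves implicit.
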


\begin{proof}
First, if $(q,a)$ is not productive then by our assumption $L_{\delta_1}=L_{\delta_2}=\emptyset$. Assume contrary that $(q,a)$ is productive and there exists a tree $r\in L_{\delta_1}\cap L_{\delta_2}$ with two respective parity-accepting runs $\rho_1$, $\rho_2$. Since $(q,a)$ is productive so there exists a tree $t$ and an accepting run $\rho$ on $t$ such that $\rho(\finA)=q$ and $t(\finA)=a$ for some vertex $\finA$. Consider the tree $t'=t[\finA\ot r]$ --- the tree obtained from $t$ by substituting $r$ as the subtree under $\finA$. Since $\rho(\finA)=q$ and both $\rho_1$, $\rho_2$ start from $(q,a)$, we can construct two accepting runs $\rho[\finA\ot \rho_1]$ and $\rho[\finA\ot \rho_2]$ on $t'$. Since these runs differ on the transition used in $\finA$, we obtain a contradiction to the fact that $\aut{A}$ is unambiguous. $\hfill\qed$
\end{proof}
The above lemma will be important in the algorithm \textsc{Partition}, because it uses the \textsc{Seperation} algorithm which in turn requires disjointness of the languages. %Now we are ready to  formulate algorithm \textsc{Partition}. %Algorithm~\ref{alg:partition} denoted $\textsc{Partition}$.

\RestyleAlgo{ruled}
\begin{algorithm}
\label{alg:partition}
\caption{\textsc{Partition}}

\SetAlgoVlined
\LinesNumbered
\KwIn{An unambiguous automaton $\aut{A}$ of index $(\rmin,2\rmax)$} 
\KwOut{for every $\delta\in\Delta$ an automaton $\Cc_{\delta}$}

\ForEach{$(q,a)\in Q\times A$, productive}{
%$\Delta_{q,a}\leftarrow$ \emph{transitions of $\aut{A}$ from} $(q,a)$

\ForEach{$\delta\in\Delta_{q,a}$}{
   $\aut{E}_{\delta}\leftarrow$ \emph{non-det. $(\rmin,2\rmax)$ automaton recognising $L_\delta$}
   
   $\aut{F}_{\delta}\leftarrow$ \emph{non-det. $(\rmin,2\rmax)$ automaton recognising $\bigcup_{\eta\in\Delta_{q,a}, \eta\neq\delta} L_{\eta}$}
}
\ForEach{$\delta\in\Delta_{q,a}$}{
   $\aut{D}_{\delta}\leftarrow \textsc{Separation}(E_\delta,F_\delta)$
}
\ForEach{$\delta\in\Delta_{q,a}$}{
   $\aut{C}_{\delta}\leftarrow$ $\comp(\rmin{+}1,2\rmax)$   \emph{automaton recognising} $\lang(\aut{D}_\delta)\setminus \bigcup_{\eta\neq\delta}\lang(\aut{D}_{\eta})$.
}
   $\aut{B}_{q,a}\leftarrow$ $\comp(\rmin{+}1,2\rmax)$  \emph{automaton recognising} $\trees_A\setminus\bigcup_{\delta\in\Delta_{q,a}} \lang(\aut{D}_{\delta})$.
}

\ForEach{$\delta=(q,a,q_\dL,q_\dR)\in\Delta_{q,a}$ with $(q,a)$ non-productive}{
   $\aut{C}_{\delta}\leftarrow$ $\comp(0,0)$   \emph{automaton recognising the empty language}.
}
\end{algorithm}

\noindent
The following lemma summarizes properties of the algorithm $\textsc{Partition}$.
\begin{lemma}
\label{lem:C-delta-properties}
Assume that $\Aa$ is an unambiguous automaton of index $(\rmin,2\rmax)$ and let $(q,a)\in Q\times A$. Take the automata $\big(\Cc_\delta\big)_{\delta\in\Delta_{q,a}}$ constructed by $\textsc{Parition}(\Aa)$. Then the languages $\lang(\Cc_\delta)$ for $\delta\in\Delta_{q,a}$ are pairwise disjoint and $L_\delta\subseteq \lang(\Cc_\delta)$.
\end{lemma}
A proof of this lemma follows directly from the definition of the respective automata, see Figure~\ref{fig:partition} for an illustration of this construction.
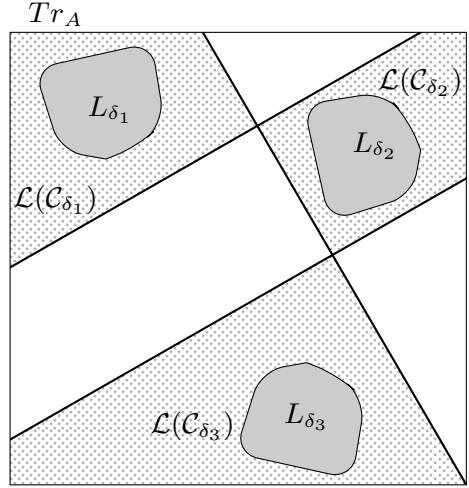
\begin{figure}
\centering
\begin{tikzpicture}[scale=0.6]

\coordinate (c1) at (5.40, 7.90);
\coordinate (c2) at (7.05, 5.10);

 \path[pattern=crosshatch dots, pattern color=black!30] 	(0, 4.8) -- (0, 10) -- (4.2, 10) -- (c1) -- cycle;

 \path[pattern=crosshatch dots, pattern color=black!30]	(0, 1) -- (c2) -- (10, 0) -- (0,0) -- cycle;

 \path[pattern=crosshatch dots, pattern color=black!30] 	(c1) -- (9,10) -- (10, 10) -- (10,6.8) -- (c2) -- cycle;

\node[lang] at (1, 10.4) {$\trees_A$};

\draw (0,0) -- (10,0) -- (10,10) -- (0, 10) -- cycle;

\draw[fill=black!20,blob] (2.1,7.2) arc (-90:270:1) ;

\draw[fill=black!20,blob] (9,7.4) arc ( 0:360:1.0) ;

\draw[fill=black!20,blob] (6.5,2.7) arc (90:-270:1) ;

 \dashedLine{shift={(0,1)},rotate=30}{0.3}{north west lines}{black!60}

 \dashedLine{shift={(10,0)},rotate=120}{0.3}{north east lines}{black!60}

 \dashedLine{shift={(9,10)},rotate=210}{0.3}{north west lines}{black!60}

\node[lang] at (2.2,8.3) {$L_{\delta_1}$};

\node[lang] at (8,7.5) {$L_{\delta_2}$};

\node[lang] at (6.5,1.5) {$L_{\delta_3}$};

\node[lang] at (1.0,6.3) {$\lang(\Cc_{\delta_1})$};

\node[lang] at (9.0,8.9) {$\lang(\Cc_{\delta_2})$};

\node[lang] at (4.0,1.3) {$\lang(\Cc_{\delta_3})$};

\end{tikzpicture}
\caption{An illustration of the output of the algorithm \textsc{Partition}. The three circles are the languages $L_{\delta_i}$ for the transitions starting in a fixed pair $(q,a)$.
Each straight line represents the language $\lang(\Dd_{\delta_i})$ that separates the respective language $L_{\delta_i}$ from the others.
Our construction provides the automata $\Cc_{\delta_i}$ recognising the dotted regions.}
\label{fig:partition}
\end{figure}

\section{Construction of the automaton}
\label{sec:construction}
\label{sec:main-algorithm}
\label{sec:uBuchi-construction}

In this and the following section we will describe the algorithm \textsc{Transformation} and prove Theorem~\ref{th:unamb-Buchi-main} which states correctness and properties of  this algorithm. Given an automaton $\aut{A}$ of index $(\rmin,2\rmax)$,  the algorithm \textsc{Transformation} constructs an alternating $\comp(\rmin+1,2\rmax)$ automaton $\aut{R}$ recognising $\lang(\aut{A})$. 
It will consist of two sub-automata running in parallel:
\begin{enumerate}
\item In the first sub-automaton the role of \eve will be to propose a partial run $\parfun{\rho}{\{\dL,\dR\}^\ast}{Q}$ on a given tree $t$. She will be forced to propose certain unique run $\rho_t$ that depends only on the tree $t$, see Definition~\ref{def:unique}. At any moment \adam can \emph{challenge} the currently proposed transition and check if it agrees with the definition of $\rho_t$ (namely Condition~\eqref{eq:rho-t-constraint}).
\item In the second sub-automaton the role of \adam will be to prove that the partial run $\rho_t$ is not parity-accepting. That is, he will find a leaf in $\rho_t$ or an infinite branch of $\rho_t$ that does not satisfy the parity condition. Since the run $\rho_t$ is unique, \adam can declare in advance what will be the odd priority $n$ that is the limes superior (i.e.~$\limsup$) of priorities of $\rho_t$ on the selected branch.
\end{enumerate}

The automaton $\aut{R}$ consists of an \emph{initial component} $\Ii$ and of the union of the automata $\aut{C}_{\delta}$ constructed by the procedure $\textsc{Partition}$.

\RestyleAlgo{ruled}

\begin{algorithm}
\label{alg:transformation}
\caption{\textsc{Transformation}}

\SetAlgoVlined
\LinesNumbered
\KwIn{An unambiguous automaton $\aut{A}$ of index $(\rmin,2\rmax)$} 
\KwOut{An automaton $\Rr$}

$N \leftarrow \{\star\}\cup \big\{n\in\{\rmin,\ldots,2\rmax\}\mid \text{$n$ is odd}\big\}$

$Q_{\Ii,\eve}\leftarrow Q^\Aa\times N\sqcup\{\bot,\top\}$

$Q_{\Ii,\adam}\leftarrow \Delta^\Aa\times N$

$\Delta_{\Ii}\leftarrow \big\{(\bot,a,\epsilon,\bot),(\top,a,\epsilon,\top)\mid a\in A^{\Aa}\big\}$

$q_\init^\Rr\leftarrow (q_\init^\Aa,\star)$

$\big(\aut{D}_{\delta}\big)_{\delta\in\Delta}\leftarrow \textsc{Partition}(\Aa)$

\ForEach{$a\in A$, $q\in Q^{\Aa}$, $n\in N$}{
%\uIf{$(q,a)$ is not productive}{
%  $\Delta_{\Ii}\leftarrow\Delta_{\Ii}\cup \{((q,n),a,\epsilon,\bot)\}$\label{it:eve_loses}
%}
\uIf{$n\neq\star$ and $\Omega^\aut{A}(q)>n$}{
  $\Delta_{\Ii}\leftarrow\Delta_{\Ii}\cup \{((q,n),a,\epsilon,\top)\}$\label{it:adam_loses}
}
\Else{
  $\Delta_{\Ii}\leftarrow\Delta_{\Ii}\cup \Big\{\big((q,n),a,\epsilon,(\delta,n)\big)\mid\delta\in\Delta^\Aa_{q,a}\Big\}$\label{it:eve_trans}
}
}
\ForEach{$a\in A$, $\delta=(q,a,q_\dL,q_\dR)\in\Delta^\Aa$, $n\in N$}{
  $\Delta_{\Ii} \leftarrow \Delta_{\Ii}\cup \big\{(\delta, a,\epsilon, q_\init^{\Cc_\delta})\big\}\quad$\label{it:adam_challenge}  \tcc{ such a transition is a \emph{challenge}}
    
  \If{$n\neq \star$}{     
     $\Delta_{\Ii} \leftarrow \Delta_{\Ii}\cup \big\{(\delta, a,d, (q_d,n))\mid d\in\{\dL,\dR\} \big\}$\label{it:adam_proceed}
  }
  \Else{
     $\Delta_{\Ii} \leftarrow \Delta_{\Ii}\cup \big\{(\delta, a,d, (q_d,n'))\mid d\in\{\dL,\dR\}, n'\in N \big\}$\label{it:adam_choose}
  }
}

$Q^{\Rr}_\eve\leftarrow Q_{\Ii,\eve}\sqcup \bigsqcup_{\delta\in\Delta^{\aut{A}}} Q^{\aut{C_\delta}}_\eve$

$Q^{\Rr}_\adam\leftarrow Q_{\Ii,\adam}\sqcup \bigsqcup_{\delta\in\Delta^{\aut{A}}} Q^{\aut{C_\delta}}_\adam$

$\Delta^{\aut{R}}\leftarrow \Delta_{\Ii}\sqcup \bigsqcup_{\delta\in\Delta^{\aut{A}}} \Delta^{\aut{C_\delta}}$

\ForEach{$q\in Q^\Aa$}{
	$\Omega^\Rr(q,\star) = 0$
	
	\ForEach{$n\in N\setminus\{\star\}$} {
	\If{$\Omega^\Aa(q)\geq n$}{     
		$\Omega^\Rr(q,n) = 1$
	}	
	\Else{
		$\Omega^\Rr(q,n) = 0$
	}
}
}

\ForEach{$\delta=(q,a,q_\dL,q_\dR)\in Q^\Aa$}{
	$\Omega^\Rr(\delta,\star) = 0$
	
	\ForEach{$n\in N\setminus\{\star\}$} {
	\If{$\Omega^\Aa(q)\geq n$}{     
		$\Omega^\Rr(\delta,n) = 1$
	}	
	\Else{
		$\Omega^\Rr(\delta,n) = 0$
	}
}
}
\end{algorithm}

The idea of the automaton $\Rr$ is to simulate the following behaviour. Assume that the label of the current vertex is $a$ and the current state is $(q,n)\in Q_{\Ii,\eve}$:
\begin{itemize}
%\item if the pair $(q,a)$ is not productive, \eve loses, see line~\ref{it:eve_loses};
\item if $n\neq\star$ and $\Omega^\Aa(q)>n$ then \adam loses, see line~\ref{it:adam_loses};
\item \eve declares a transition $\delta=(q,a,q_\dL,q_\dR)$ of $\aut{A}$, see line~\ref{it:eve_trans};
\item \adam can decide to \emph{challenge} this transition, see line~\ref{it:adam_challenge};
\item if $n\neq\star$ then \adam chooses a direction and the game proceeds, see line~\ref{it:adam_proceed};
\item if $n=\star$ then \adam chooses a direction and a new value $n'\in N$, see line~\ref{it:adam_choose}.
\end{itemize}

%Note that for each tree $t$, each play in the game $\Gg(\aut{R}, t)$ starts in $\Ii$ and either stays in it forever or leaves to some $\aut{C}_{\delta}$ and stays there forever. Note also that $\Ii$ consists of two components: the set of states with $n=\star$ and $C_F$ where $n\neq \star$. Let the priorities of all the states of the form $(q,\star)$ equal $0$. Consider a state $(q,n)$ with $n\neq \star$. If $\Omega^\aut{A}(q)=n$ then such a state has priority $1$, otherwise (i.e.~if $\Omega^\aut{A}(q)<n$) the priority of $(q,n)$ is $0$. \todo{A figure showing how $C$ and $C_\delta$ are connected}

Figure~\ref{fig:Rr-struct} depicts the structure of the automaton $\Rr$. The initial component $\Ii$ is split into two parts: $\Ii_0$ where $n=\star$ and $\Ii_1$ where $n\neq\star$.

We will now proceed with proving properties of the procedure \textsc{Transformation}.

\begin{figure}
\centering
\begin{tikzpicture}[scale=0.48]
\tikzstyle{edge}=[draw, thick, ->]
\tikzstyle{aut}=[draw, circle, minimum size=24pt]

\path[draw=black, rounded corners=5] (0,0.5) rectangle ++(1,-1.5);

\foreach \x in {0, 1, 3} {
  \path[draw=black, rounded corners=5] (\x*1.5,-2) rectangle ++(1,-1.5);
}

\node[scale=2.0] at (3.5, -2.75) {$\cdots$};

\path (6.1, -1.0) edge[rbrace] node[align=left] {$Q\times\{\star\}$,\\ $\Delta\times\{\star\}$} ++(0, 1.5);

\path (-1, -1.0) edge[lbrace] node[align=right] {$\mathcal{I}_0$ of index $(0,0)$} ++(0, 1.5);

\path (6.1, -3.5) edge[rbrace] node[align=left] {$Q\times\big(N\setminus\{\star\}\big)$,\\
$\Delta\times\big(N\setminus\{\star\}\big)$} ++(0, 1.5);

\path (-1, -3.5) edge[lbrace] node[align=right] {$\mathcal{I}_1$ of index $(0,1)$\\({\small $(0,0)$ if $\rmin+1=2\rmax$})} ++(0, 1.5);

\path (6.1, -6.5) edge[rbrace] node {automata $\mathcal{C}_{\delta_i}$} ++(0, 2.0);

\path (-1, -6.5) edge[lbrace] node {$\comp(i{+}1,2j)$ automata} ++(0, 2.0);

\foreach \x in {0, 1, 3} {
  \draw[edge] (0.5, -0.75) -- (0.5+1.5*\x, -2.25);
}

\tikzstyle{mm} = [circle, inner sep=0pt, minimum size=3pt]

\node[mm] (a0) at (0.5+1.5*0, -5.05) {};
\node[mm] (a3) at (0.5+1.5*3, -5.05) {};

\foreach \x in {0, 1, 3} {
  \foreach \y in {0, 3} {
    \draw[edge] (0.5+\x*1.5, -3.25) -- (a\y);
  }
}

\node[scale=2.0] at (2.75, -5.5) {$\cdots$};

\node[aut] at (0.5, -5.5) {$\mathcal{C}_{\delta_1}$};
\node[aut] at (5.0, -5.5) {$\mathcal{C}_{\delta_n}$};

\draw (0.5, -0.75) edge[edge, bend right=40] (a0);
\draw (0.5, -0.75) edge[edge, bend right=-30] (a3);

\end{tikzpicture}
\caption{The structure of the automaton $\Rr$.}
\label{fig:Rr-struct}
\end{figure}

\begin{lemma}
If $\Aa$ is an unambiguous automaton of index $(\rmin,2\rmax)$ then $\Rr$ is a~$\comp(\rmin+1,2\rmax)$ automaton.
\end{lemma}

\begin{proof}
We first argue that if $\rmin+1<2\rmax$ then $\Rr$ is a $\comp(\rmin+1,2\rmax)$ automaton. Note every strongly-connected component in the graph of $\Rr$ is either a component of $\Ii_0$, $\Ii_1$, or of $\Cc_\delta$ for $\delta\in\Delta^\Aa$. Recall that all the components $\Aa_\delta$ are by the construction $\comp(\rmin+1,2\rmax)$-automata. By the definition, $\Ii_0$ and $\Ii_1$ are $\comp(1,2)$-automata, so the whole automaton $\aut{R}$ is also $\comp(\rmin+1,2n)$.

Consider the opposite case: $\rmin+1=2\rmax$. By shifting all the priorities we can assume that $\rmin=\rmax=1$ (i.e.~$\aut{A}$ is B\"uchi). Observe that the only possible odd value $n$ between $\rmin$ and $2\rmax$ is $n=1$. It means that if \adam declares a value $n\neq\star$ then always $\Omega(q)\geq n$ holds. It means that there are no states in $\Ii_1$ with priority $1$. Therefore, both $\Ii_0$ and $\Ii_1$ are $\comp(0,0)$ automata and $\aut{R}$ is a $\comp(0,0)$ automaton.
\end{proof}

\section{Correctness of the construction}
\label{sec:correctness}

In this section we prove that the automaton $\Rr$ constructed by the algorithm \textsc{Transformation} recognises the same language as the given unambiguous automaton $\Aa$. %We start by proving that every tree $t\in\trees_A$ induces a unique partial run $\rho_t$ of the given automaton.
%\subsection{Unique runs}
%\label{ssec:unique}
%\label{sec:unique}
%\label{sec:uBuchi-unique}
Let $\Aa$ be an unambiguous automaton of index $(\rmin,2\rmax)$. %For a given tree $t$ we will construct a unique run $\rho_t$. The crucial property of this construction is that the constraints on $\rho_t$ are $\comp(\rmin+1,2\rmax)$ and $\rho_t$ is accepting if and only if $t$ belongs to the language $\lang(\Aa)$.

\begin{definition}
\label{def:unique}
Let $t\in\trees_A$ be a tree. %We will define the partial run $\rho_t$ inductively.
We define $\rho_t$ as the unique maximal partial run $\rho_t$ of $\aut{A}$ on $t$, i.e.~a partial function $\parfun{\rho_t}{\{\dL,\dR\}^\ast}{Q^{\aut{A}}}$ such that:
\begin{itemize}
\item $\rho_t(\epsilon)=q_\init^\aut{A}$;
\item if $\finA\in\dom(\rho_t)$ and $t\restr_\finA\in\lang(\Cc_\delta)$ for some $\delta\in\Delta^\Aa$ then\footnote{By Lemma~\ref{lem:C-delta-properties} there is at most one such $\delta$.}
\begin{equation}
\label{eq:rho-t-constraint}
\delta=\big(\rho_t(\finA),t(\finA),\rho_t(\finA\dL),\rho_t(\finA\dR)\big); \tag{\color{blue}$\diamond$}
\end{equation}
\item if $\finA\in\dom(\rho_t)$ and $t\restr_\finA\notin\lang(\Cc_\delta)$ for any $\delta\in\Delta^\Aa$ then $\finA\dL,\finA\dR\notin\dom(\rho_t)$.
\end{itemize}
\end{definition}

\begin{lemma}
\label{lem:unique}
$t\in\lang(\Aa)$ if and only if $\rho_t$ is total and accepting.
\end{lemma}

\begin{proof}
If $\rho_t$ is accepting then it is a witness that $t\in\lang(\aut{A})$. Let $\rho$ be an accepting run of $\aut{A}$ on $t$. We inductively prove that $\rho=\rho_t$. Take a node $\finA$ of $t$ and define $q=\rho(\finA)$, $a=t(\finA)$, $q_{\dL}=\rho_t(\finA\dL)$, and $q_{\dR}=\rho_t(\finA\dR)$. Observe that $\rho$ is a witness that $(q,a)$ is productive and for $\delta=(q,a,q_\dL,q_\dR)$ we have
\[t\in L_{\delta}\subseteq\lang(\aut{C}_{\delta}).\]
Therefore, $\rho_t(\finA\dL)=\rho(\finA\dL)$ and $\rho_t(\finA\dR)=\rho(\finA\dR)$.  $\hfill\qed$
\end{proof}

\subsection{\texorpdfstring{$\lang(\aut{A})=\lang(\aut{R})$}{L(A)=L(R)}}
\label{sec:uBuchi-correctness}

\begin{lemma}
\label{lem:la_subset_lr}
If $t\in\lang(\aut{A})$ then $t\in\lang(\aut{R})$.
\end{lemma}

\begin{proof}
Assume that $t\in\lang(\Aa)$. By Lemma~\ref{lem:unique} we know that $\rho_t$ is the unique accepting run of $\Aa$ on $t$. Consider the following strategy $\sigma_\eve$ for \eve in the initial component $\Ii$ of the automaton $\Rr$: always declare $\delta$ consistent with $\rho_t$. Extend it to the winning strategies in $\aut{C}_\delta$ whenever they exist. That is, if the current vertex is $\finA$ and the state of $\aut{R}$ is of the form $(q,n)\in \Ii$ then move to the state $(\delta,n)$ for $\delta=(\rho_t(\finA),t(\finA),\rho_t(\finA \dL),\rho_t(\finA\dR))$. Whenever the game moves from the initial component $\Ii$ into one of the automata $\aut{C}_\delta$ in a vertex $\finA$, fix some winning strategy in $\game(\aut{C}_\delta, t\restr_\finA)$ (if exists) and play according to this strategy; if there is no such strategy, play using any strategy.
Take a play consistent with $\sigma_\eve$ in $\game(\aut{R},t)$. There are the following cases:
\begin{itemize}[noitemsep,topsep=0pt,parsep=0pt,partopsep=0pt]
\item \adam loses in a finite time according to the transition from line~\ref{it:adam_loses} in the algorithm \textsc{Transformation}.
\item \adam stays forever in the initial component $\Ii$ never changing the value of $n=\star$ and loses by the parity criterion.
\item In some vertex $\finA$ of the tree \adam \emph{challenges} the transition $\delta$ given by \eve and the game proceeds to the state $q_\init^{\Cc_\delta}$. In that case $t\restr_\finA\in L_\delta$ by the definition of $L_\delta$ (the run $\rho_t\restr_\finA$ is a witness) and therefore $t\restr_\finA\in\lang(\Cc_\delta)$. So \eve has a winning strategy in $\game(\Cc_\delta,t\restr_\finA)$ and \eve wins the rest of the game.
\item \adam declares a value $n\neq\star$ at some point and then never \emph{challenges} \eve. In that case the game follows an infinite branch $\infA$ of $t$. Since $\rho_t$ is accepting so we know that $k\eqdef\limsup_{i\to\infty} \Omega^\aut{A}(\rho_t(\infA\restr_i))$ is even. If $k>n$ then \adam loses at some point according to the transition from line~\ref{it:adam_loses}. Otherwise $k<n$ and from some point on all the states of $\aut{R}$ visited during the game have priority $0$, thus \adam loses by the parity criterion in $\Ii_1$.  $\hfill\qed$
\end{itemize}
\end{proof}

\begin{lemma}
If $t\notin\lang(\aut{A})$ then $t\notin\lang(\aut{R})$.
\end{lemma}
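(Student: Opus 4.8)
The plan is to show that if $t\notin\lang(\aut{A})$ then $\adam$ has a winning strategy in $\game(\aut{R},t)$; since $\eve$ then cannot win, $t\notin\lang(\aut{R})$. The guiding idea is that, inside the initial component $C$, the automata $\aut{C}_\delta$ leave $\eve$ no genuine choice of transition, so she is driven along one particular run of $\aut{A}$; that run cannot be accepting, and $\adam$ uses the $C_F$-part of the game to expose an odd branch of it.

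First I would fix $\adam$'s behaviour inside $C$. Suppose the play is at a vertex $v$ with $\aut{A}$-state $q$ and $\eve$ declares a transition $\delta$ starting from $(q,t(v))$. If $(q,t(v))$ is not productive, $\eve$ has already lost by Condition~\ref{it:eve_looses}. Otherwise the languages $\lang(\aut{C}_{\delta'})$, for $\delta'$ ranging over the transitions starting from $(q,t(v))$, are pairwise disjoint and cover $\trees_A$, so exactly one of them contains $t\restr_v$; call the corresponding transition $\delta^\ast(v,q)$. If $\delta\neq\delta^\ast(v,q)$, then $\adam$ rejects: the play enters $\game(\aut{C}_\delta,t\restr_v)$ and, since $t\restr_v\notin\lang(\aut{C}_\delta)$, $\adam$ follows a winning strategy there. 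If $\delta=\delta^\ast(v,q)$, then $\adam$ accepts, and his remaining choices (the direction, and the value $n'$ when $n=\bot$) are dictated by the following precomputation. Let $\rho^\ast$ be the partial run of $\aut{A}$ that $\eve$ is effectively forced to follow: $\rho^\ast(\epsilon)=q_0$, and whenever $\rho^\ast(v)=q$ with $(q,t(v))$ productive, the two children of $v$ receive the target states of $\delta^\ast(v,q)$, while $\rho^\ast$ is undefined below any $v$ with $(\rho^\ast(v),t(v))$ not productive. Since $t\notin\lang(\aut{A})$, either $\rho^\ast$ is not total --- it has a vertex $w$ with $(\rho^\ast(w),t(w))$ not productive --- or $\rho^\ast$ is a run from $q_0$ that is not accepting, so there is a branch $b^\ast$ with $m^\ast:=\limsup_{j}\Omega(\rho^\ast(b^\ast\restr_j))$ odd; note $m^\ast$ is an odd number between $i$ and $2n$, hence a legal value for $n'$. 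In the first case $\adam$ keeps $n'=\bot$ forever and always steers towards $w$; in the second he always steers along $b^\ast$, keeps $n'=\bot$ up to the least step $J$ beyond which $\Omega(\rho^\ast(b^\ast\restr_j))\le m^\ast$ always holds (such a $J$ exists because $m^\ast$ is the $\limsup$), and at step $J$ declares $n'=m^\ast$.

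Then I would verify that $\adam$ wins every play consistent with this strategy. If $\eve$ ever declares a transition other than $\delta^\ast$, or the play reaches a vertex with a non-productive pair, the argument above already gives an $\adam$-win. Otherwise the $\aut{A}$-states seen along $\adam$'s chosen path coincide with $\rho^\ast$ restricted to that path. In the non-total case the play reaches $w$ in state $\rho^\ast(w)$ and $\eve$ loses by Condition~\ref{it:eve_looses}. In the other case the play stays inside $C$ forever, following $b^\ast$; from step $J$ on it is in $C_F$ with status $m^\ast$, Condition~\ref{it:adam_looses} never fires by the choice of $J$, and a visited state $(\rho^\ast(b^\ast\restr_j),m^\ast)$ has priority $1$ exactly when $\Omega(\rho^\ast(b^\ast\restr_j))=m^\ast$, which happens for infinitely many $j$ since $m^\ast$ is the $\limsup$; hence the highest priority occurring infinitely often is $1$, and $\eve$ loses in $C_F$.

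The main obstacle --- and the reason $\aut{R}$ is shaped the way it is --- is the one-shot choice of $n'$: $\adam$ must commit to it at a finite stage, while the correct value $m^\ast$ is a $\limsup$. This would be hopeless if $\eve$ retained any freedom over her transitions after $n'$ is fixed, but the automata $\aut{C}_\delta$ remove that freedom by pinning her to the single run $\rho^\ast$, so that $m^\ast$ and the threshold $J$ depend on $t$ alone and $\adam$ may read them off in advance. I expect the only delicate point in the write-up to be tracking the precise moment at which $n'$ is declared, so that Condition~\ref{it:adam_looses} is only ever tested at states whose $\aut{A}$-priority is genuinely at most $m^\ast$.
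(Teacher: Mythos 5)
Your proposal is correct and follows essentially the same route as the paper: it builds the same canonical partial run (the paper's $\rho$, your $\rho^\ast$) by letting the partition $\{\lang(\aut{C}_\delta)\}$ pick out the unique ``compatible'' transition at each productive vertex, then has $\adam$ reject any deviation (winning in $\aut{C}_\delta$ by disjointness), steer along a bad branch, and declare the branch's odd $\limsup$ once priorities have settled below it. The only cosmetic difference is that where $b$ is a finite leaf you keep $n'=\bot$ forever while the paper sets $k$ to an arbitrary odd value; both are immaterial since $\eve$ loses by Condition~\ref{it:eve_looses} upon reaching the unproductive pair.
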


\begin{proof}
We assume that $t\notin\lang(\aut{A})$ and define a winning strategy for \adam in the game $\game(\Rr,t)$. Let us fix the run $\rho_t$ as in Definition~\ref{def:unique}.

Note that either $\rho_t$ is a partial run: there is a vertex $\finA$ such that $\rho_t(\finA)=q$ and $(q,t(\finA))$ is not productive; or $\rho_t$ is a total run. Since $t\notin\lang(\aut{A})$, $\rho_t$ cannot be a total accepting run. Let $\infA$ be a finite or infinite branch: either $\infA\in\{\dL,\dR\}^\ast$ and $\infA$ is a leaf of $\rho_t$ or $\infA$ is an infinite branch such that $k\eqdef\limsup_{i\to\infty} \Omega^\aut{A}(\rho_t(\infA\restr_i))$ is odd. If $\infA$ is finite let us put any odd value between $\rmin$ and $2\rmax$ as $k$.
Consider the following strategy for \adam:
\begin{itemize}[noitemsep,topsep=0pt,parsep=0pt,partopsep=0pt]
\item \adam keeps $n=\star$ until there are no more states of priority greater than $k$ along $\infA$ in $\rho_t$. Then he declares $n'=k$.
\item \adam \emph{challenges} a transition $\delta$ given by \eve in a vertex $\finA$ if and only if $t\restr_\finA\notin \Cc_\delta$.
\item \adam always follows $\infA$: in a vertex $\finA\in\{\dL,\dR\}^\ast$ he chooses the direction $d$ in such a way that $\finA d\preceq\infA$.
\end{itemize}

As in the proof of Lemma \ref{lem:la_subset_lr}, we extend this strategy to strategies in the components $\aut{C}_\delta$ whenever such strategies exist: if the game moves from the component $\Ii$ into one of the component $\aut{C}_\delta$ in a vertex $\finA$ then \adam uses some winning strategy in the game $\Gg(\aut{C}_\delta,t\restr_\finA)$ (if it exists); if there is no such strategy, \adam plays using any strategy.

Consider any play $\plyA$ consistent with $\sigma_\adam$. Note that if $\infA$ is a finite word and the play $\plyA$ reaches the vertex $\infA$ in a state $(\delta,n)$ in $\Ii$ then 
by the definition of $\rho_t$ we know that $t\restr_\finA\notin \Cc_\delta$ and thus \adam \emph{challenges} this transition and wins in the game $\game(\aut{C}_{\delta},t\restr_\finA)$. By the definition of the strategy $\sigma_\adam$, \adam never loses according to the transition from line~\ref{it:adam_loses}  in the algorithm \textsc{Transformation} --- if \adam declared $n\neq \star$ then the play will never reach a state of priority greater than $n$.

Let us consider the remaining cases. First assume that at some vertex $\finA$ player \adam \emph{challenged} a transition $\delta$ declared by \eve. It means that $t\restr_\finA\notin\lang(\Cc_{\delta})$ and \adam has a winning strategy in $\game(\aut{C}_{\delta},t\restr_\finA)$ and wins in that case.

The last case is that \adam did not \emph{challenge} any transition declared by \eve and the play followed the branch $\infA$. Then, for every $i\in\N$ the game reached the vertex $\infA\restr_i$ in a state $(q,n)$ satisfying $q=\rho_t(\infA\restr_i)$. In that case there is some vertex $\finA$ along $\infA$ where \adam declared $n=k$. Therefore, infinitely many times $\Omega(q)=n$ in $\plyA$ so \adam wins that play by the parity criterion. $\hfill\qed$
\end{proof}

\section{Conclusion}
%\todo{Missing conclusion}
We presented a new algorithm \textsc{Transformation} which for a given unambiguous automaton $\Aa$ of index $(\rmin,2\rmax)$ outputs an automaton $\textsc{Transformation}(\Aa)$ which accepts the same language and belongs to the class $\comp(\rmin+1,2\rmax)$. In particular, if $\Aa$ is an unambiguous B\"uchi automaton, 
then  $\textsc{Transformation}(\Aa)$ is a weak alternating automaton. This can be considered an automata-theoretic counterpart of the  Lusin-Souslin Theorem~\cite[Theorem~15.1]{kechris_descriptive}. 

\subsection{Further work}

This paper is a part of a broader project intended to understand better the descriptive complexity of unambiguous languages of infinite trees. In our view the crucial question is whether unambiguous automata can reach arbitrarily high levels in the alternating index hierarchy. 

\smallskip
\noindent
%\begin{conjecture*}
{\em Conjecture. } There exists a pair $(\rmin,\rmax)$ such that if $\Aa$ is an unambiguous automaton on infinite trees then the language recognised by $\Aa$ can be recognised by an alternating automaton of index $(\rmin,\rmax)$.
%\end{conjecture*}

%\subsection{Acknowledgements}
%The authors would like to thank Igor Walukiewicz for discussions on the subject. The result of the paper has been included in the PhD thesis of the second author.

\bibliographystyle{plain}% the recommended bibstyle
\bibliography{mskrzypczak}

\end{document}